\theoremstyle{thmstyleone}%
\newtheorem{theorem}{Theorem}
\theoremstyle{thmstyletwo}%
\newtheorem{example}{Example}%
\newtheorem{remark}{Remark}%
\theoremstyle{thmstylethree}%
\newtheorem{definition}{Definition}%
\begin{document}

\title[Article Title]{Explicit Constructions of Optimal $(r,\delta)$-Locally Repairable Codes}


\author[1,2]{\fnm{Yaxin} \sur{Wang}}\email{lhyx123698@163.com}

\author*[1,2]{\fnm{Siman} \sur{Yang}}\email{smyang@math.ecnu.edu.cn}

%

\affil*[1]{\orgdiv{School of Mathematical Sciences}, \orgname{East China Normal University}, \orgaddress{\street{No. 500, Dong Chuan Road}, \city{Shanghai}, \postcode{200241},\country{China}}}

\affil[2]{\orgdiv{Shanghai Key Laboratory of PMMP}, \orgname{East China Normal University}, \orgaddress{\street{No. 500, Dong Chuan Road}, \city{Shanghai}, \postcode{200241},  \country{China}}}

\abstract{Locally repairable codes (LRCs) have recently been widely used in distributed storage systems and the LRCs with $(r,\delta)$-locality ($(r,\delta)$-LRCs) attracted a lot of interest for tolerating multiple erasures. Ge \emph{et al.} constructed $(r,\delta)$-LRCs with unbounded code length and optimal minimum distance when $\delta+1 \leq d \leq 2\delta$ from the parity-check matrix equipped with the Vandermonde structure, but the block length is limited by the size of $\mathbb{F}_q$. In this paper, we propose a more general construction of $(r,\delta)$-LRCs through the parity-check matrix. Furthermore, with the help of MDS codes, we give three classes of explicit constructions of optimal $(r,\delta)$-LRCs with block length beyond $q$. It turns out that 1) our general construction extends the results of Ge \emph{et al.} and 2) our explicit constructions yield some optimal $(r,\delta)$-LRCs with new parameters.}

\keywords{locally repairable codes, block length, maximum distance separable (MDS) codes.}



\maketitle

\section{Introduction}\label{sec1}
In this era of big data, distributed storage systems have emerged to meet the needs of large-scale storage applications. In order to ensure the reliability of stored data, locally repairable codes(LRCs) are widely used in distributed storage systems.

In 2012, Gopalan \emph{et al.} \cite{bib1} first proposed the concept of LRC. An $[n,k]$ linear code $\mathcal{C}$ is called an LRC with locality $r$ if every code symbol of $\mathcal{C}$ can be repaired by accessing at most $r$ other symbols.  The singleton-type bound for the minimum distance of $[n,k,d]$-LRC with locality $r$ was also given in \cite{bib1} as follows:
\begin{equation}\label{01}
d\leq n-k-\lceil\frac{k}{r}\rceil+2.
\end{equation}
The codes whose minimum distance meets this bound are called optimal. Many constructions of optimal LRCs
have been reported in \cite{bib2,bib3,bib4,bib5,bib6}. However, the locality $r$ is only suitable for repairing a single erasure. Prakash \emph{et al.}\cite{bib7} introduced the concept of $(r,\delta)$-locality for repairing multiple erasures.

For a linear code $\mathcal{C}$ with code length $n$, the $i$-th code symbol of $\mathcal{C}$ is said to have $(r,\delta)$-locality ($\delta\geq 2$) if there exists a subset $S_i\subseteq [n]$ such that: 1) $i\in S_i$, 2) $|S_i|\leq r+\delta-1$ and 3) the punctured code $\mathcal{C}|_{S_i}$ has the minimum distance $d(\mathcal{C}|_{S_i})\geq \delta$. The code $\mathcal{C}$ is said to have $(r,\delta)$-locality or be an $(r,\delta)$-LRC if every code symbol has $(r,\delta)$-locality. Noted that the $(r,\delta=2)$-LRC degenerates to an LRC with locality $r$. It was also proved in \cite{bib7} that an $(r,\delta)$-LRC with parameters $[n,k,d]$ must obey the following Singleton-type bound:
\begin{equation}\label{02}
d\leq n-k+1-(\lceil\frac{k}{r}\rceil-1)(\delta-1).
\end{equation}
In this paper, an $(r,\delta)$-LRC meeting the above bound is called optimal. Constructions of optimal $(r,\delta)$-LRCs attract a lot of interest and many constructions through the parity-check matrix approach have been obtained. 

Guruswami \emph{et al.}\cite{bib4} equipped the parity-check matrix with the Vandermonde structure and constructed optimal $(r,\delta=2)$-LRCs with unbounded code length and minimum distance $d=3,4$. Furthermore, they proved that the code length of a $q$-ary $(r,\delta=2)$-LRCs with minimum distance $d\geq 5$ is upper bounded by $O(q^3)$ and present the construction of $(r,\delta=2)$-LRCs with super-linear  lengths by a greedy algorithm. With the help of extremal graph theory, Xing \emph{et al.}\cite{bib19} improve the known
results in \cite{bib4} for $d\geq7$ through the parity-check matrix with the Vandermonde structure. Later in \cite{bib11}, Ge \emph{et al.} generalized the work of Xing to construct optimal $(r,\delta\geq2)$-LRCs by the parity-check matrix of the following form
\begin{equation}\label{03}
H=\begin{pmatrix}
	U_1&O&\cdots&O\\
	O&U_2&\cdots&O\\
	\vdots&\vdots&\ddots&\vdots\\
	O&O&\cdots&U_m\\
	V_1&V_2&\cdots&V_m
\end{pmatrix},
\end{equation}
where $U_i$ and $\begin{pmatrix}
U_i\\V_i
\end{pmatrix}$ are both Vandermonde matrices. When $\delta+1\leq d\leq 2\delta$, they constructed optimal $(r,\delta)$-LRCs with unbounded code length. A similar construction also appeared in \cite{bib16}. The parity-check matrix $H$ is divided into $m$ disjoint column blocks corresponding to $m$ local repair groups. In this paper, the length of every column block is called block length. Although the code length in \cite{bib11} and \cite{bib16} is unbounded which seems to be ideal, the block length is limited by $q$ due to the Vandermonde structure. Recently, \cite{bib12} presented a construction of optimal $(r,\delta=3)$-LRC with unbounded code length and the block length reaches $q+1$. Motivated by the work, this paper explores further constructions of optimal $(r,\delta)$-LRCs with block length beyond the limitation of $q$. We give a general construction of optimal $(r,\delta)$-LRC with minimum distance $\delta\leq d\leq 2\delta$ which extends the construction in \cite{bib11}. Furthermore, we also construct three classes of explicit optimal $(r,\delta)$-LRCs whose block length can reach $q+1$ or $q+2$.  

The rest of this paper is organized as follows. In Section \ref{sec2}, we provide
some preliminaries on MDS codes. In Section \ref{sec3}, A general construction of optimal $(r,\delta)$-LRCs is given. In Section \ref{sec4}, we present three classes of explicit constructions of optimal $(r,\delta)$-LRCs by using MDS codes.  Finally, Section \ref{sec6} concludes the paper.

\section{Preliminaries}\label{sec2}
In this paper, we apply MDS codes whose code length exceeds $q$ to produce explicit optimal $(r,\delta)$-LRCs. We first introduce some notions and known results about MDS codes in this section.

\begin{definition}[\cite{bib13}]\label{defn1}
	 A linear code with parameters $[n,k,d]$ such that $k+d=n+1$ is called a maximum distance separable (MDS) code.
\end{definition}

For convenience, we define the MDS property as follows.

\begin{definition}
	Let $m\leq n$ be any positive integers. A matrix $A\in \mathbb{F}_q^{m\times n}$ is said to have MDS property if any $m$ column vectors of $A$ are linearly independent. In other words, $A$  is a parity-check matrix of an $[n,k,d]$-MDS code such that $m=n-k=d-1$.
\end{definition}

\begin{example}[\cite{bib14}]\label{ex1}
	Let $q>2$ be a prime power. Let $\alpha$ be a primitive element of $\mathbb{F}_q$ and $f(x)$ be a polynomial on $\mathbb{F}_q$. Define a matrix in $\mathbb{F}_q^{2\times (q+1)}$ by
	\begin{equation}\label{04}
		G_1=\begin{pmatrix}
			f(\alpha)&f(\alpha^2)&\cdots&f(\alpha^{q-1})&1&0\\
			\alpha&\alpha^2&\cdots&\alpha^{q-1}&0&1
		\end{pmatrix},
	\end{equation}
	where $f(x)$ satisfies the two conditions below:\\
	1)\ $f(x)\ne 0$ for all $x\in \mathbb{F}_q^{*}$, and\\
	2)\ $yf(x)-xf(y)\ne 0$ for all distinct $x$ and $y\in\mathbb{F}_q^{*}$.\\
	Then the code $\mathcal{C}$ with the parity-check matrix $G_1$ is a $[q+1,q-1,3]$-MDS code.
\end{example} 

 \begin{example}[\cite{bib15}]\label{ex2}
	Let $q>3$ be a power of 2. Let $\alpha$ be a primitive element of $\mathbb{F}_q$. Define a matrix in $\mathbb{F}_q^{3\times(q+2)}$ by
	\begin{equation}\label{05}
		G_2=\begin{pmatrix}
			1&1&\cdots&1&1&0&0\\
			\alpha&\alpha^2&\cdots&\alpha^{q-1}&0&1&0\\
			\alpha^2&(\alpha^2)^2&\cdots&(\alpha^{q-1})^2&0&0&1
		\end{pmatrix},
	\end{equation}
	then the code $\mathcal{C}$ with the parity-check matrix $G_2$ is a $[q+2,q-1,4]$-MDS code. 
\end{example}

\section{General Construction}\label{sec3}

In this section, we propose a more general construction of optimal $(r,\delta)$-LRC. We still make use of the parity-check matrix as the form in \eqref{03}, but the matrices $U_i$ and $V_i$ are no longer required to satisfy the property as strong as the Vandermonde structure.

\begin{theorem}\label{thm1}
	Let $r,\delta,d,m$ be positive integers such that $\delta\leq d\leq 2\delta$ and  $r>d-\delta$. Define the following matrix on $\mathbb{F}_q$
	\begin{equation}\label{06}
		H=\begin{pmatrix}
			U_1&O&\cdots&O\\
			O&U_2&\cdots&O\\
			\vdots&\vdots&\ddots&\vdots\\
			O&O&\cdots&U_m\\
			V_1&V_2&\cdots&V_m
		\end{pmatrix},
	\end{equation}
	where $U_i$ is a $(\delta-1)\times (r+\delta-1)$ matrix and $V_i$ is a $(d-\delta)\times (r+\delta-1)$ matrix for any $1\leq i\leq m$. If $U_i$ and $\begin{pmatrix}
		U_i\\V_i
	\end{pmatrix}$ both have MDS property for any $1\leq i\leq m$, then the code $\mathcal{C}$ with $H$ as the parity-check matrix is an optimal $(r,\delta)$-LRC with parameters $[n=m(r+\delta-1), k=rm-(d-\delta) ,d]$.
\end{theorem}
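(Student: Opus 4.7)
The plan is to verify three properties of $\mathcal{C}$: (i) $\dim\mathcal{C}=k=rm-(d-\delta)$; (ii) every coordinate has an $(r,\delta)$-local recovery set; (iii) any $d-1$ columns of $H$ are linearly independent, so $d(\mathcal{C})\geq d$. Because $0\leq d-\delta<r$ we have $\lceil k/r\rceil=m$, so the Singleton-type bound \eqref{02} reads $d(\mathcal{C})\leq m(\delta-1)+(d-\delta)+1-(m-1)(\delta-1)=d$. Thus (i)--(iii) together force $d(\mathcal{C})=d$ and optimality.

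For (i), I would show the rows of $H$ are linearly independent. Let $\alpha_i\in\mathbb{F}_q^{\delta-1}$ combine the rows of $U_i$ and $\beta\in\mathbb{F}_q^{d-\delta}$ combine the bottom rows; restricting the zero combination to block $i$ gives $(\alpha_i^T,\beta^T)\begin{pmatrix}U_i\\V_i\end{pmatrix}=0$. Because $\begin{pmatrix}U_i\\V_i\end{pmatrix}$ has MDS property, it has full row rank $d-1$, forcing $\alpha_i=0$ and $\beta=0$. Hence $\operatorname{rank}(H)=m(\delta-1)+(d-\delta)$ and $\dim\mathcal{C}=k$. For (ii), take $S_\ell$ to be the column index set of block $\ell$. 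For any $c\in\mathcal{C}$, the block-$\ell$ rows of $Hc^T=0$ give $U_\ell c|_{S_\ell}^T=0$, so $\mathcal{C}|_{S_\ell}$ is a subcode of the $[r+\delta-1,r,\delta]$-MDS code with parity-check matrix $U_\ell$, and therefore has minimum distance at least $\delta$.

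The core step is (iii). Select $d-1$ columns of $H$, with $a_i$ drawn from block $i$, so $\sum_i a_i=d-1\leq 2\delta-1$. The hypothesis $d\leq 2\delta$ ensures that at most one block, say block $\ell$, can have $a_\ell\geq\delta$; every other block satisfies $a_j\leq\delta-1$. Writing a hypothetical dependence with block-$i$ coefficient vector $\lambda_{i,\cdot}$: for each $j\neq\ell$ the block-$j$ rows give $U_j\lambda_{j,\cdot}=0$ on support of size $a_j\leq\delta-1$, and the MDS property of $U_j$ forces $\lambda_{j,\cdot}=0$. Substituting into the bottom rows leaves $V_\ell\lambda_{\ell,\cdot}=0$, which together with the block-$\ell$ equation $U_\ell\lambda_{\ell,\cdot}=0$ gives $\begin{pmatrix}U_\ell\\V_\ell\end{pmatrix}\lambda_{\ell,\cdot}=0$ on support of size $a_\ell\leq d-1$; the MDS property of the stack now forces $\lambda_{\ell,\cdot}=0$. (If every $a_i\leq\delta-1$, the per-block equations kill everything directly.)

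The main obstacle is step (iii): the constraint $d\leq 2\delta$ is exactly what confines the ``heavy'' block to a single one, allowing the weaker MDS property of $U_j$ to handle every other block while the stronger MDS property of $\begin{pmatrix}U_\ell\\V_\ell\end{pmatrix}$ is reserved for the block that may absorb up to $d-1$ chosen columns. Steps (i) and (ii) then follow cleanly from the same two MDS hypotheses, and the Singleton-type bound supplies the matching upper bound on $d(\mathcal{C})$.
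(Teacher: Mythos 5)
Your proposal is correct and follows essentially the same route as the paper: full row rank of $H$ via the MDS property of the stacked blocks to get the dimension, locality from the MDS property of $U_i$, the Singleton-type bound with $\lceil k/r\rceil=m$ (using $r>d-\delta$) to cap the distance at $d$, and the two-case column-independence argument where $d\leq 2\delta$ confines the block with $\geq\delta$ selected columns to at most one. No gaps; the argument matches the paper's proof step for step.
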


\begin{proof}
	The code length is clear. Since $U_i$ has MDS property, any $\delta-1$ column vectors of $U_i$ are linearly independent. This implies that the code with $U_i$ as the parity-check matrix has the minimum distance at least $\delta$. It is obvious that $\mathcal{C}$ has $(r,\delta)$-locality.
	
	Suppose that there exist row vectors $\boldsymbol{\lambda}_i\in \mathbb{F}_q^{\delta-1},1\leq i\leq m$ and $\boldsymbol{\lambda}_{m+1}\in \mathbb{F}_q^{d-\delta}$ such that
$$	\boldsymbol{\lambda}_1(U_1,O,\cdots,O)+\cdots+\boldsymbol{\lambda}_m(O,O,\cdots,U_m)+\boldsymbol{\lambda}_{m+1}(V_1,V_2,\cdots,V_m)=\mathbf{0},$$

		then we have
		$$\boldsymbol{\lambda}_iU_i+\boldsymbol{\lambda}_{m+1}V_i=(\boldsymbol{\lambda}_i,\boldsymbol{\lambda}_{m+1})\begin{pmatrix}
			U_i\\V_i
		\end{pmatrix}=\mathbf{0}, \text{ for any}\ 1\leq i\leq m.$$
		By assumption,	$\begin{pmatrix}
			U_i\\ V_i
		\end{pmatrix}$ is row full rank. This implies that 
		$$\boldsymbol{\lambda}_{m+1}=\mathbf{0}\  \text{and}\ \boldsymbol{\lambda}_i=\mathbf{0} \text{ for any}\ 1\leq i\leq m.$$
		Therefore, $H$ is row full rank and we obtain the dimension of $\mathcal{C}$ as follows:
		$$n-rank(H)=n-(m(\delta-1)+d-\delta)=rm-(d-\delta).$$
		
		By the singleton-type bound \eqref{02}, we have
		\begin{align*}
			d(\mathcal{C})&\leq n-k+1-(\lceil\frac{k}{r}\rceil-1)(\delta-1)\\
			&\leq(\delta-1)m+(d-\delta)+1-(m-1)(\delta-1)\\
			&=d,
		\end{align*}
		when $r>d-\delta$.
		It suffices to prove the minimum distance is at least $d$, i.e., any $d-1$ columns of $H$ are linearly independent. Select arbitrary $d-1$ columns from $H$.
		
		It is clear that any $d-1$ columns in the same block are linearly independent as $\begin{pmatrix}
			U_i\\V_i
		\end{pmatrix}$ has MDS property.
		
		If the $d-1$ columns are distributed in different blocks, we assume that these blocks are $B_{i_1},B_{i_2},\cdots,B_{i_s}$, $i_1<i_2<\cdots<i_s$. Since $d\leq 2\delta$, there is at most one block containing at least $\delta$ columns. Consider the following two cases separately. 
		
		Case 1: Every block contains at most $\delta-1$ columns. Consider the following matrix consisting of the former $m(\delta-1)$ components of the selected $d-1$ columns :
		$$\begin{pmatrix}
			D_{i_1}&O&\cdots&O\\
			O&D_{i_2}&\cdots&O\\
			\vdots&\vdots&\ddots&\vdots\\
			O&O&\cdots&D_{i_s}
		\end{pmatrix},$$
		where $D_{i_j}$ consists of some columns of $U_{i_j}$ for $1\leq j\leq s$. Since $U_{i_j}$ has MDS property which implies any $\delta-1$ columns of $U_{i_j}$ are linearly independent, $D_{i_j}$ is column full rank. Combining this with the clear fact that the columns of $D_{i_j}, 1\leq j\leq s$ are mutually linearly independent, the above matrix is column full rank. Thus, the selected $d-1$ columns are linearly independent.
		
		Case 2: There is one block containing at least $\delta$ columns. Without loss of generality, we assume that the block is $B_{i_1}$. Consider the following matrix consisting of the selected $d-1$ columns:
		$$\begin{pmatrix}
			D_{i_1}&O&\cdots&O\\
			O&D_{i_2}&\cdots&O\\
			\vdots&\vdots&\ddots&\vdots\\
			O&O&\cdots&D_{i_s}\\
			E_{i_1}&E_{i_2}&\cdots&E_{i_s}
		\end{pmatrix},$$
		where $D_{i_j}$ and $V_{i_j}$ consist of some columns of $U_{i_j}$ and $E_{i_j}$ for $1\leq j\leq s$, respectively. Suppose that there exist column vectors $\boldsymbol{\lambda}_j,1\leq j\leq s$ such that
		$$\begin{pmatrix}
			D_{i_1}\\
			O\\
			\vdots\\
			O\\
			E_{i_1}
		\end{pmatrix}\boldsymbol{\lambda}_1+\begin{pmatrix}
			O\\
			D_{i_2}\\
			\vdots\\
			O\\
			E_{i_2}
		\end{pmatrix}\boldsymbol{\lambda}_2+\cdots+\begin{pmatrix}
			O\\
			O\\
			\vdots\\
			D_{i_s}\\
			E_{i_s}
		\end{pmatrix}\boldsymbol{\lambda}_s=\mathbf{0},$$

		then we have the system of equations:
		$$\begin{cases}
			D_{i_j}\boldsymbol{\lambda}_j=\mathbf{0}\quad \text{for} \ 1\leq j\leq s,\\
			\sum_{j=1}^{s}E_{i_j}\boldsymbol{\lambda}_j=\mathbf{0}.
		\end{cases}$$
		According to Case 1, $D_{i_j}$ is column full rank for $2\leq j\leq s$, which implies $\boldsymbol{\lambda}_i=\mathbf{0}$ when $2\leq j \leq s$. We substitute them into the above equations to obtain
		$$\begin{pmatrix}
			D_{i_1}\\E_{i_1}
		\end{pmatrix}\boldsymbol{\lambda}_1=\mathbf{0}.$$
		Since $\begin{pmatrix}
			D_{i_1}\\E_{i_1}
		\end{pmatrix}$ consists of some columns of $\begin{pmatrix}
			U_{i_1}\\V_{i_1}
		\end{pmatrix}$ and any $d-1$ columns of the matrix $\begin{pmatrix}
			U_{i_1}\\V_{i_1}
		\end{pmatrix}$ are linearly independent due to the MDS property,  we have $\boldsymbol{\lambda}_1=\mathbf{0}$. Thus, the selected $d-1$ columns are linearly independent.
	\end{proof}
	
	\begin{remark}
		Let $D$ be an $m\times n$ Vandermonde matrix with $m\leq n$. Any $m$ columns of $D$ are linearly independent. Thus, $D$ satisfies MDS property. Clearly, the construction of \cite{bib11} and \cite{bib16} is a special case of Theorem \ref{thm1}.
	\end{remark}

\section{Explicit Constructions}\label{sec4}
Based on the characterization of the parity-check matrix as shown in Theorem \ref{thm1}, one can construct an optimal $(r,\delta)$-LRC as long as there exist matrices $\begin{pmatrix}
	U_i\\V_i
\end{pmatrix}$ and $U_i$ with MDS property. 

Combining Theorem \ref{thm1} with Example \ref{ex1}, we have the following result. The proof is immediate from Theorem \ref{thm1} and is omitted here.

\begin{theorem}\label{thm2}
	Let $r>1,m$ be any positive integer and $q$ is a prime power such that $q\geq r+1$. Let $\alpha$ be a primitive element of $\mathbb{F}_q$. Define the following matrix:
	\begin{equation}
		H_1=\begin{pmatrix}
			U_1&O&\cdots&O\\
			O&U_1&\cdots&O\\ 			
			\vdots&\vdots&\ddots&\vdots\\
			O&O&\cdots&U_1
		\end{pmatrix},
	\end{equation}
	where 
	\begin{equation}
		U_1=\begin{pmatrix}
			f(\alpha)&f(\alpha^2)&\cdots&f(\alpha^r)&1&0\\
			\alpha&\alpha^2&\cdots&\alpha^r&0&1
		\end{pmatrix}
	\end{equation}
	and $f(x)$ is a polynomial on $\mathbb{F}_q$ such that
	
	1)\ $f(x)\ne 0$ for all $x\in \mathbb{F}_q^{*}$, and
	
	2)\ $yf(x)-xf(y)\ne 0$ for all distinct $x$ and $y\in\mathbb{F}_q^{*}$.\\
	Let $H_1$ contains $m$ blocks $U_1$. If $\mathcal{C}$ is a linear code with the parity-check
	matrix $H_1$, then $\mathcal{C}$ is an optimal $(r,\delta=3)$-LRC with parameters $[n=(r+2)m,k=rm,d=3]$. Especially when $r=q-1$, the block length reaches $q+1$.
\end{theorem}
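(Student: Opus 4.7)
The plan is to apply Theorem \ref{thm1} directly, specializing to the case $\delta=d=3$. Under this specialization, the $V_i$ blocks of \eqref{06} have $d-\delta=0$ rows and therefore disappear, so the parity-check matrix $H$ of Theorem \ref{thm1} collapses to exactly the pure block-diagonal form claimed for $H_1$. The MDS hypothesis on $\begin{pmatrix}U_i\\V_i\end{pmatrix}$ then reduces to the single requirement that each $U_1$, as a $2\times(r+2)$ matrix over $\mathbb{F}_q$, has MDS property.

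First I would verify that MDS property for $U_1$ by checking that every pair of its columns is linearly independent. There are four types of pairs: two evaluation columns $(f(\alpha^i),\alpha^i)^T$ and $(f(\alpha^j),\alpha^j)^T$ with $i\ne j$, whose $2\times 2$ determinant is $\alpha^j f(\alpha^i)-\alpha^i f(\alpha^j)$, nonzero by condition (2); an evaluation column paired with $(1,0)^T$, whose determinant is $-\alpha^i\ne 0$; an evaluation column paired with $(0,1)^T$, whose determinant is $f(\alpha^i)\ne 0$ by condition (1); and the two unit columns, whose determinant is $1$. This is precisely the verification already carried out in Example \ref{ex1}, restricted to the first $r\le q-1$ evaluation points $\alpha,\alpha^2,\dots,\alpha^r$.

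With the MDS property established, Theorem \ref{thm1} immediately yields that $\mathcal{C}$ is an optimal $(r,\delta=3)$-LRC with length $n=m(r+\delta-1)=m(r+2)$, dimension $k=rm-(d-\delta)=rm$, and minimum distance $d=3$. The hypothesis $r>d-\delta=0$ is automatic since $r>1$, and the bound $q\geq r+1$ ensures that distinct evaluation points $\alpha,\dots,\alpha^r$ exist.

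Finally, for the block-length claim, I would simply observe that the block length of $H_1$ is $r+\delta-1=r+2$; substituting the maximal admissible value $r=q-1$ gives block length $q+1$, matching the length of the MDS code in Example \ref{ex1}. I do not anticipate a real obstacle here: the argument is essentially a bookkeeping specialization of Theorem \ref{thm1}, with all substantive content packaged into the MDS verification from Example \ref{ex1}.
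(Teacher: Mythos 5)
Your proposal is correct and follows exactly the route the paper intends: the paper omits the proof, stating it is immediate from Theorem \ref{thm1} combined with Example \ref{ex1}, and your specialization to $\delta=d=3$ (so the $V_i$ blocks vanish) together with the pairwise-determinant check of the MDS property for $U_1$ is precisely that argument, carried out in full. No gaps.
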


\begin{remark}
 Obviously, the constant polynomial $f(x)\equiv 1$ satisfies the two conditions mentioned in Theorem \ref{thm2} and the corresponding construction is just the construction in \cite[Theorem 1]{bib12}.
\end{remark}

In order to illustrate the above construction better, an example is given below.

\begin{example}
	Let $q=5$, $r=q-1=4$ and $m=2$. Let $\alpha=2$ which is a primitive element of $\mathbb{F}_5$. Choose the constant polynomial $f(x)\equiv 1$ and construct the following matrix:
	$$\begin{pmatrix}
		\begin{array}{cccccc|cccccc}
			1&1&1&1&1&0  &0&0&0&0&0&0\\
			2&4&3&1&0&1  &0&0&0&0&0&0\\
			0&0&0&0&0&0  &1&1&1&1&1&0\\
			0&0&0&0&0&0  &2&4&3&1&0&1
		\end{array}
	\end{pmatrix},$$
	then the code $\mathcal{C}$ with the above matrix as the parity-check matrix is an $(r=4,\delta=3)$-LRC with parameters $[12,8,3]$. By the Singleton-type bound
	$$n-k+1-(\lceil\frac{k}{r}\rceil-1)(\delta-1)=12-8+1-(2-1)(3-1)=3,$$
	the minimum distance is optimal.
\end{example}

Combining Theorem \ref{thm1} with Example \ref{ex2}, we have the following result.

\begin{theorem}
	Let $r>1,m$ be any positive integer and $q$ is a power of 2 such that $q\geq max\{4,r+1\}$. Let $\alpha$ be a primitive element of $\mathbb{F}_q$. Define the following matrix:
	\begin{equation}
		H_2=\begin{pmatrix}
			U_2&O&\cdots&O\\
			O&U_2&\cdots&O\\
			\vdots&\vdots&\ddots&\vdots\\
			O&O&\cdots&U_2
		\end{pmatrix},
	\end{equation}
	where 
	\begin{equation}
		U_2=\begin{pmatrix}
			1&1&\cdots&1&1&0&0\\
			\alpha&\alpha^2&\cdots&\alpha^{r}&0&1&0\\
			\alpha^{2}&(\alpha^2)^2&\cdots&(\alpha^{r})^2&0&0&1
		\end{pmatrix}.
	\end{equation}
	Let $H_2$ contains $m$ blocks $U_2$. If $\mathcal{C}$ is a linear code with the parity-check
	matrix $H_2$, then $\mathcal{C}$ is an optimal $(r,\delta=4)$-LRC with parameters $[n=(r+3)m,k=rm,d=4]$. Especially when $r=q-1$, the block length reaches $q+2$.
\end{theorem}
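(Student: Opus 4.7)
The plan is to reduce the statement to Theorem \ref{thm1} with the parameter choice $\delta=4$ and $d=\delta=4$. Under this choice one has $d-\delta=0$, so each auxiliary block $V_i$ is the empty $0\times(r+3)$ matrix; the shape of $H_2$ then coincides exactly with the block-diagonal parity-check matrix \eqref{06}; the candidate parameters $[n=m(r+3),\,k=rm-(d-\delta)=rm,\,d=4]$ match those claimed; and the side condition $r>d-\delta$ becomes $r>0$, which holds by hypothesis.

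The only substantive verification needed before invoking Theorem \ref{thm1} is that $U_2$ has the MDS property, i.e.\ any three of its columns are $\mathbb{F}_q$-linearly independent. Once this is granted, the MDS-property condition on $\begin{pmatrix} U_i \\ V_i \end{pmatrix}$ is automatic, since with $V_i$ empty this stacked matrix coincides with $U_i=U_2$. To establish the MDS property I would observe that $U_2$ is obtained from the matrix $G_2$ of Example \ref{ex2} by deleting the $q-1-r$ columns $(1,\alpha^{i},\alpha^{2i})^{\top}$ for $r+1\leq i\leq q-1$, a legitimate selection since $r\leq q-1$ (equivalently $q\geq r+1$). Because Example \ref{ex2} asserts that $G_2$ is the parity-check matrix of a $[q+2,q-1,4]$-MDS code, every three of its columns are linearly independent, and this property is inherited by any column-subset, in particular by $U_2$.

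With the hypotheses of Theorem \ref{thm1} confirmed, that theorem immediately yields that the code $\mathcal{C}$ with parity-check matrix $H_2$ is an optimal $(r,\delta=4)$-LRC with parameters $[(r+3)m,\,rm,\,4]$. The block-length claim is then a single line of arithmetic: each diagonal block has length $r+\delta-1=r+3$, so taking $r=q-1$ (permitted by $q\geq r+1$) gives block length $q+2$, as asserted.

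I do not expect a genuine obstacle here, since the construction was engineered precisely so that the extremal MDS code of Example \ref{ex2} plugs directly into the framework of Theorem \ref{thm1}. The only mildly delicate point is the degenerate case $d=\delta$, in which $V_i$ is empty and $\begin{pmatrix} U_i \\ V_i \end{pmatrix}$ collapses to $U_i$; one should double-check that the argument of Theorem \ref{thm1} still goes through. A quick inspection shows that when $d=\delta$ Case 2 of that proof is vacuous (no block can contain $\delta$ of the $d-1=\delta-1$ selected columns), so only Case 1 is needed and it uses only the MDS property of $U_i$. Hence the reduction works cleanly and the theorem follows.
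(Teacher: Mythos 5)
Your proposal is correct and follows exactly the route the paper intends: the paper states this theorem as an immediate consequence of combining Theorem \ref{thm1} (with $d=\delta=4$, so $V_i$ empty) with Example \ref{ex2}, omitting the details. Your verification that the MDS property of $U_2$ is inherited from $G_2$ by column deletion, and your check that the degenerate case $d=\delta$ causes no trouble in the proof of Theorem \ref{thm1}, are precisely the details the paper leaves to the reader.
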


\begin{example}
	Let $q=2^2$, $r=q-1=3$ and $m=2$. Let $\alpha$ be a primitive element of $\mathbb{F}_4$ which is the root of the irreducible polynomial $x^2+x+1$ on $\mathbb{F}_2$. Construct the following matrix:
	$$\begin{pmatrix}
		\begin{array}{cccccc|cccccc}
			1&1&1&1&0&0 &0&0&0&0&0&0\\
			\alpha&\alpha^2&1&0&1&0&0&0&0&0&0&0\\
			\alpha^2&\alpha&1&0&0&1&0&0&0&0&0&0\\
			0&0&0&0&0&0&1&1&1&1&0&0\\
			0&0&0&0&0&0&\alpha&\alpha^2&1&0&1&0\\
			0&0&0&0&0&0&\alpha^2&\alpha&1&0&0&1
		\end{array}
	\end{pmatrix},$$
	then the code $\mathcal{C}$ with the above matrix as the parity-check matrix is an $(r=3,\delta=4)$-LRC with parameters $[12,6,4]$. By the Singleton-type bound
	$$n-k+1-(\lceil\frac{k}{r}\rceil-1)(\delta-1)=12-6+1-(2-1)(4-1)=4,$$
	the minimum distance is optimal.	
\end{example}

Next, we give the other construction by making use of the matrix $G_2$ in  \eqref{05}. By removing the last column from $G_2$, we can get the submatrix 
$$G_3=\begin{pmatrix}
		1&1&\cdots&1&1&0\\
		\alpha&\alpha^2&\cdots&\alpha^{q-1}&0&1\\
		\alpha^2&(\alpha^2)^2&\cdots&(\alpha^{q-1})^2&0&0
	\end{pmatrix}.$$
Obviously, $G_3$ also has MDS property. Consider its submatrix $G_4$ consisting of the first two rows of $G_3$, it is easy to verify that any two columns of $G_4$ are linearly independent. Thus, $G_4$ has MDS property. Next we apply $G_3$ and $G_4$ to Theorem \ref{thm1} to obtain the following result.

\begin{theorem}
	Let $r>1,m$ be any positive integer and $q$ be a power of $2$ such that $q\geq max\{4,r+1\}$. Let $\alpha$ be a primitive element of $\mathbb{F}_q$. Define the following matrix:
	\begin{equation}
		H_3=\begin{pmatrix}
			U_3&O&\cdots&O\\
			O&U_3&\cdots&O\\
			\vdots&\vdots&\ddots&\vdots\\
			O&O&\cdots&U_3\\
			V_3&V_3&\cdots&V_3
		\end{pmatrix},
	\end{equation}
	where 
	\begin{equation}
		U_3=\begin{pmatrix}
			1&1&\cdots&1&1&0\\
			\alpha&\alpha^2&\cdots&\alpha^{r}&0&1\\
		\end{pmatrix}, 
	\end{equation}
	\begin{equation}
		V_3=\begin{pmatrix}
			\alpha^{2}&(\alpha^2)^2&\cdots&(\alpha^{r})^2&0&0\\
		\end{pmatrix}.
	\end{equation}
	Let $H_3$ contains $m$ blocks $U_3$. If $\mathcal{C}$ is a linear code with the parity-check
	matrix $H_3$, then $\mathcal{C}$ is an optimal $(r,\delta=3)$-LRC with parameters $[n=(r+2)m,k=rm-1,d=4]$. Especially when $r=q-1$, the block length reaches $q+1$.
\end{theorem}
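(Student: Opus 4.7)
The plan is a direct instantiation of Theorem~\ref{thm1}. I will set $\delta = 3$, $d = 4$, and take $U_i := U_3$ and $V_i := V_3$ for every $1 \leq i \leq m$. The sizes are compatible: $U_3$ is a $(\delta-1)\times(r+\delta-1) = 2\times(r+2)$ matrix and $V_3$ is a $(d-\delta)\times(r+\delta-1) = 1\times(r+2)$ matrix. The hypotheses of Theorem~\ref{thm1} on $r$, $\delta$, $d$ all hold: $\delta \leq d \leq 2\delta$ is just $3 \leq 4 \leq 6$, and $r > d-\delta$ reduces to $r > 1$. Once the two MDS conditions are checked, Theorem~\ref{thm1} immediately delivers an optimal $(r,\delta=3)$-LRC with $n = m(r+\delta-1) = m(r+2)$, $k = rm - (d-\delta) = rm-1$, and $d = 4$; and the block-length claim, namely that $r = q-1$ gives block length $q+1$, is then immediate from the shape of $U_3$.

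The core of the work is to verify that both $U_3$ and $\begin{pmatrix} U_3 \\ V_3 \end{pmatrix}$ satisfy the MDS property over $\mathbb{F}_q$. The second of these is precisely the matrix $G_3$ introduced in the paragraph preceding the theorem, obtained from the parity-check matrix $G_2$ of the $[q+2,q-1,4]$-MDS code in Example~\ref{ex2} by deleting its final column. Since any $3$ columns of a parity-check matrix of a $[q+2,q-1,4]$-MDS code are linearly independent, the same is true for its column submatrix $G_3$, giving the required MDS property. For $U_3$, which is the matrix called $G_4$ in that paragraph, I will verify the $2\times 2$ MDS property directly: two columns of the form $(1,\alpha^i)^\top$ and $(1,\alpha^j)^\top$ with distinct $1 \leq i,j \leq r \leq q-1$ give the determinant $\alpha^j - \alpha^i \neq 0$ because $\alpha$ is primitive; pairing any such column with $e_1$ or $e_2$ yields a triangular $2\times 2$ minor with nonzero diagonal; and $(e_1, e_2)$ itself is the identity. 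Equivalently, $U_3$ coincides with the matrix $G_1$ of Example~\ref{ex1} in the case $f(x) \equiv 1$.

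With both MDS properties in hand, the theorem follows from Theorem~\ref{thm1} with no further calculation; as a sanity check, the Singleton-type bound \eqref{02} evaluates, using $\lceil (rm-1)/r \rceil = m$ (valid since $r \geq 2$), to $(r+2)m - (rm-1) + 1 - (m-1)(2) = 4$, confirming that the value $d = 4$ achieved by the construction is optimal. I do not anticipate any real obstacle: the novelty here is purely in recognizing that $G_3$ (a submatrix of the Example~\ref{ex2} MDS matrix) and its top-two-row restriction $G_4$ can be used as the block building pieces $\begin{pmatrix} U_i \\ V_i \end{pmatrix}$ and $U_i$ required by Theorem~\ref{thm1}. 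The only minor point to flag is that Theorem~\ref{thm1} allows the $V_i$ to differ across blocks, whereas here a single $V_3$ is reused; this of course still satisfies the hypotheses.
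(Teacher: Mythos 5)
Your proposal is correct and matches the paper's own (omitted) argument exactly: the paper likewise establishes the MDS property of $G_3$ as a column submatrix of $G_2$ and of its top-two-row restriction $G_4$, and then invokes Theorem~\ref{thm1} with $\delta=3$, $d=4$. Your explicit $2\times 2$ determinant check for $U_3$ and the sanity check of the Singleton-type bound are fine additions but not a different route.
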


\begin{example}
	Let $q=2^2$, $r=q-1=3$ and $m=2$. Let $\alpha$ be a primitive element of $\mathbb{F}_4$ which is the root of the irreducible polynomial $x^2+x+1$ on $\mathbb{F}_2$. Construct the following matrix:
	$$\begin{pmatrix}
		\begin{array}{ccccc|ccccc}
			1&1&1&1&0   &0&0&0&0&0   \\
			\alpha&\alpha^2&1&0&1   &0&0&0&0&0  \\
			0&0&0&0&0   &1&1&1&1&0    \\
			0&0&0&0&0   &\alpha&\alpha^2&1&0&1    \\
			\hline
			\alpha^2&\alpha&1&0&0   &\alpha^2&\alpha&1&0&0  
		\end{array}
	\end{pmatrix},$$
	then the code $\mathcal{C}$ with the above matrix as the parity-check matrix is an $(r=3,\delta=3)$-LRC with parameters $[10,5,4]$. Since
	$$n-k+1-(\lceil\frac{k}{r}\rceil-1)(\delta-1)=10-5+1-(2-1)(3-1)=4,$$
	the minimum distance is optimal.	
\end{example}

\section{Conslusions}\label{sec6}
In this paper, through the parity-check matrix with MDS property, we present a general construction of optimal $(r,\delta)$-LRC. Compared with the construction  in \cite{bib11} and \cite{bib16}, our construction requires a weeker condition on the parity-check matrix. Furthermore, Three classes of explicit constructions of optimal $(r,\delta)$-LRCs are given by applying known MDS codes. When $r=q-1$, their block length can reach $q+1$ or $q+2$ which is longer than those constructions by employing the Vandermonde structure.

\bibliography{sn-bibliography}

\end{document}